  \newcommand{\bildweite}{\columnwidth}%
  \newcommand{\bildweite}{0.5\columnwidth}%
\begin{document}

%
\title{A Sharp Double Inequality for\\the Inverse Tangent Function}
%
%
%

\author{Gholamreza~Alirezaei

\thanks{G.~Alirezaei is with the Institute for Theoretical Information Technology,
RWTH Aachen University, 52056 Aachen, Germany (e-mail: alirezaei@ti.rwthaachen.de).}
\thanks{The present work is categorized in terms of Mathematics Subject Classification (MSC2010): 26D05, 26D07, 26D15, 33B10, 39B62.}}
\maketitle

\begin{abstract}
  The inverse tangent function can be bounded by different inequalities, for example by Shafer's inequality. In this publication, we propose a new sharp double inequality, consisting of a lower and an upper bound, for the inverse tangent function. In particular, we sharpen Shafer's inequality and calculate the best corresponding constants. The maximum relative errors of the obtained bounds are approximately smaller than ${0.27\%}$ and ${0.23\%}$ for the lower and upper bound, respectively. Furthermore, we determine an upper bound on the relative errors of the proposed bounds in order to describe their tightness analytically. Moreover, some important properties of the obtained bounds are discussed in order to describe their behavior and achieved accuracy.

\end{abstract}

\begin{IEEEkeywords}
  trigonometric bounds; Shafer's inequality; inverse tangent approximation;
\end{IEEEkeywords}

%
\IEEEpeerreviewmaketitle

\section{Introduction}
\label{sec:Introduction}
\IEEEPARstart{T}{he} inverse tangent function is an elementary mathematical function that appears in many applications, especially in different fields of engineering. In electrical engineering, especially in the communication theory and signal processing, it is mostly used to describe the phase of a complex-valued signal. But there are many other applications in which the inverse tangent function plays an important role. On the one hand, it is often used as an approximation for more complex functions because of its elementary behavior. For instance, the Heaviside step function is the most famous function that can be very accurately approximated by the inverse tangent function. On the other hand, it is sometimes approximated by simpler functions in order to enable further calculations. For instance, the inverse tangent function can be accurately approximated by its argument if the absolute value of the argument is sufficiently small. Quite naturally the problem arises how to replace the inverse tangent function with a surrogate function, in order to approximate the inverse tangent function as well as other contemplable functions accurately. If a surrogate function with a mathematically simple form could be found, then the subsequent application of such a surrogate function would be considerable. A few application cases are in the field of information and estimation theory where an unknown phase shift or the direction of arrival is estimated, for example by using the CORDIC-algorithm~\cite{Volder}, the MUSIC-algorithm~\cite{Schmidt}, or MAP and ML estimators~\cite{Fu}. Some other cases are in the field of system design and control theory where a non-linear network unit is modeled by a non-linear function, for instance the saturation behavior of an amplifier~\cite{Zeng} or the sigmoidal non-linearity in neuronal networks~\cite{Widrow}. Some more application cases are related to the theory of signals and systems, where a signal should be mapped into a set of coefficients of basis functions, however the transformation is not feasible because of the phase description by the inverse tangent function, for example in some Fourier-related transforms~\cite{Hwang}. Many other applications are likewise conceivable.

But we have to mention that finding a simple replacement for the inverse tangent function is in fact difficult. In the present work, we thus focus only on a special idea which has some nice properties and is described in the following.

In~\cite{Shafer1966}, R.~E.~Shafer proposed the elementary problem: \emph{Show that for all ${x > 0}$ the inequality
\begin{equation}\label{eq:ShaferInequality}
   \frac{3 x}{1+2 \sqrt{1+x^2}} <   \arctan(x)
\end{equation}
holds}, where $\arctan(x)$ denotes the inverse tangent function that is defined for all real numbers $x$. From Shafar's problem several inequalities have been emerged to date. In particular, the authors in~\cite{Qi2009} investigated double inequalities of the form
\begin{equation}\label{eq:QiInequality1}
   \frac{a_1 x}{a_2 + \sqrt{1+x^2}}  <   \arctan(x)  <  \frac{b_1 x}{b_2 + \sqrt{1+x^2}} \ , \ \ x > 0 \ ,
\end{equation}
and they determined the coefficients $a_1$, $a_2$, $b_1$ and $b_2$ such that the above double inequality is sharp. In the present work, we follow a similar idea and investigate a generalized version of the double inequality in~\eqref{eq:QiInequality1}. We consider functions of the type
\begin{equation}\label{eq:functiontype}
   \frac{x}{c_1 + \sqrt{c_2 + c_3 x^2}}
\end{equation}
with positive real coefficients $c_1$, $c_2$ and $c_3$, because such kind of functions has advantageous properties in order to replace the inverse tangent function as we will discuss in the next section. Then we determine the triple ${(c_1, c_2, c_3)}$ such that a lower and an upper bound for the inverse tangent function is achieved. In order to describe the tightness of the obtained bounds, we determine an upper bound on the relative errors of the proposed bounds. Furthermore, we discuss some corresponding properties of the proposed bounds and visualize the achieved results.

\hfill G.~A.

\hfill July 18, 2013

\subsection*{Mathematical Notations:}
Throughout this paper we denote the set of real numbers by $\setR$. The mathematical operation $\abs{x}$ denotes the absolute value of any real number $x$. Furthermore, ${\Ord\bigl(\omega(x)\bigr)}$ denotes the order of any function $\omega(x)$.

\section{Main Theorems}
\label{sec:Main_Theorems}
In the current section, we present the new bounds for the inverse tangent function and describe some of their important properties.
\begin{theorem}\label{thm:SharpShaferInequality}
      For all ${x \in \setR}$, let ${f(x)}$, ${g(x)}$ and ${h(x)}$ be defined by
      \begin{equation}\label{eq:SharpShaferInequalityF}
        f(x) \coloneqq \frac{x}{\frac{4}{\pi^2} + \sqrt{\bigl(1 - \frac{4}{\pi^2}\bigr)^2 + \frac{4 x^2}{\pi^2}}} \ ,
      \end{equation}
      \begin{equation}\label{eq:SharpShaferInequalityG}
        g(x) \coloneqq \arctan(x)
      \end{equation}
      and
      \begin{equation}\label{eq:SharpShaferInequalityH}
        h(x) \coloneqq \frac{x}{ 1 - \frac{6}{\pi^2}  + \sqrt{\bigl(\frac{6}{\pi^2}\bigr)^2 + \frac{4 x^2}{\pi^2}}} \ .
      \end{equation}
      Then, for all ${x \in \setR}$, the double inequality
      \begin{equation}\label{eq:SharpShaferInequality}
        \abs{f(x)} \ \leq \ \abs{g(x)} \ \leq \ \abs{h(x)}
      \end{equation}
      holds.
\end{theorem}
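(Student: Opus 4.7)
Since $f$, $g$, and $h$ are all odd, it suffices to prove $f(x) \le g(x) \le h(x)$ for $x \ge 0$; the statement with absolute values then follows at once. At the endpoints the three functions agree: $f(0) = g(0) = h(0) = 0$, and because both $f$ and $h$ fit the common form \eqref{eq:functiontype} with $c_3 = 4/\pi^2$, both tend to $1/\sqrt{c_3} = \pi/2$ as $x \to \infty$, matching $\arctan(x)$. The identity $c_1 + \sqrt{c_2} = 1$, verified for $f$ by $\frac{4}{\pi^2} + (1 - \frac{4}{\pi^2}) = 1$ and for $h$ by $(1 - \frac{6}{\pi^2}) + \frac{6}{\pi^2} = 1$, further forces the slope of each bound at $x = 0$ to equal $\arctan'(0) = 1$. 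Hence both differences $\Delta_f(x) \coloneqq g(x) - f(x)$ and $\Delta_h(x) \coloneqq h(x) - g(x)$ vanish at $x = 0$ and as $x \to \infty$, so it is enough to show that each of them has exactly one critical point in $(0, \infty)$.

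The core step is a direct sign analysis of $\Delta_f'$ and $\Delta_h'$. Setting $u(x) \coloneqq \sqrt{c_2 + c_3 x^2}$ and repeatedly using $u^2 = c_2 + c_3 x^2$ to eliminate $u^2$, the derivative of $\arctan(x) - x/(c_1 + u(x))$ simplifies to
\[
  \frac{x^{2}\, E(x)}{(1+x^{2})\, u\, (c_1+u)^{2}\, (u + \sqrt{c_2})},
\]
with strictly positive denominator and with $E(x)$ affine in $u(x)$ together with an additional $x^{2}$ term. For the lower bound $f$, the equality $c_3 = c_1 = 4/\pi^2$ kills the $x^{2}$ term and reduces $E$ to $E(x) = p(\delta - p^{2}) - (p^{2} - 2\delta^{2})\, u(x)$, with $p \coloneqq \sqrt{c_2} = 1 - 4/\pi^2$ and $\delta \coloneqq c_3 = 4/\pi^2$. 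One checks the numerical facts $p^{2} > 2\delta^{2}$ and $E(0) = p(5\delta - 2) > 0$ (both valid because $\delta = 4/\pi^2$ lies in $(2/5,\sqrt{2}-1)$), so $E$ is strictly decreasing from a positive value to $-\infty$ and has a unique zero $x^{*} > 0$. Consequently $\Delta_f$ rises on $(0, x^{*})$ and falls on $(x^{*}, \infty)$, giving $\Delta_f \ge 0$ throughout.

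For the upper bound $h$ one has $c_3 \ne c_1$, so the $x^{2}$ term in $E$ survives and $E$ is no longer monotone. The choice $\sqrt{c_2} = 6/\pi^2$, equivalently $\sqrt{c_2} = \tfrac{3}{2} c_3$, is however calibrated so that $E(0) = 0$, and a short computation of $E'(x)$ shows that $E'$ changes sign exactly once on $(0, \infty)$, from negative to positive. Hence $E$ starts at $0$, dips strictly below, and then rises monotonically to $+\infty$, producing a unique zero $x^{**} > 0$. Since the sign of $\Delta_h'$ is opposite to that of the analogous expression for $f$, the function $\Delta_h$ rises on $(0, x^{**})$ and falls on $(x^{**}, \infty)$, so $\Delta_h \ge 0$ as well. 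The main obstacle throughout is the algebraic bookkeeping: the factorization above depends on the tangency identity $c_1 + \sqrt{c_2} = 1$ (used for both bounds) together with either $c_3 = c_1$ (for $f$, matching $\arctan$ at infinity) or $\sqrt{c_2} = \tfrac{3}{2} c_3$ (for $h$, enforcing higher-order tangency at the origin), and it is precisely these coincidences — which are the conditions that make the bounds sharpest within the class \eqref{eq:functiontype} — that one must track carefully to complete the sign analysis.
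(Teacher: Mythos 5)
Your argument is correct, and at the strategic level it mirrors the paper's: reduce to $x\ge 0$ by oddness, note that $\Delta_f=g-f$ and $\Delta_h=h-g$ vanish at $x=0$ and as $x\to\infty$, and show each has a single interior critical point, so that it rises to a maximum and falls back to zero without changing sign. Where you genuinely diverge is in how that critical-point count (and the sign of $\Delta'$ near the origin) is obtained. The paper computes $\Delta_f'$ and $\Delta_h'$, solves $\Delta'=0$ in closed form (Corollary~\ref{cor:RootsOfDerivativeOfDifference}), certifies that the nonzero roots are real by checking the discriminants $-2\pi^4+36\pi^2-160$ and $-5\pi^4+108\pi^2-576$ against Lemma~\ref{lem:InequalityForPi}, and then determines the sign of each difference from its Taylor or asymptotic expansion. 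You instead exploit the tangency identity $c_1+\sqrt{c_2}=1$ to factor the numerator of $\tfrac{\mathrm{d}}{\mathrm{d}x}\bigl[\arctan x - x/(c_1+u)\bigr]$ as $x^2E(x)/(u+\sqrt{c_2})$ and read off both the uniqueness of the zero of $\Delta'$ and its sign near $0$ from the monotonicity (for $f$) or unimodality (for $h$) of $E$; this buys you a single self-contained sign analysis, dispenses with the separate series-expansion corollaries, and makes visible why the coincidences $c_3=c_1$ and $\sqrt{c_2}=\tfrac{3}{2}c_3$ are exactly what the argument needs. Two points deserve to be made explicit rather than left implicit. First, your numerical hypotheses --- $4/\pi^2\in(2/5,\sqrt{2}-1)$ for $f$, and for $h$ the facts $E(0)=0$, $C+D\sqrt{c_2}=2(\pi^2-12)/\pi^4<0$ and $3D+C/\sqrt{c_2}=(48-5\pi^2)/(3\pi^2)<0$ --- reduce to $4(\sqrt{2}+1)<\pi^2<10$ and $48/5<\pi^2<12$, i.e.\ essentially the same estimates the paper isolates in Lemma~\ref{lem:InequalityForPi}; you should state and justify such a bound on $\pi^2$, since the whole proof hinges on it. Second, the step ``a short computation of $E'(x)$ shows that $E'$ changes sign exactly once'' should be written out: $E'(x)=c_3x\bigl[2D+(C+D\sqrt{c_2})/u\bigr]$, the bracket is increasing in $x$ because $C+D\sqrt{c_2}<0$, negative at $x=0$, and tends to $2D>0$, which is the whole of the claim. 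With those two items filled in, your proof is complete.
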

\begin{proof}
  See Appendix~\ref{sec:Appendix_A}.
\end{proof}
\begin{remark}
      The functions $f(x)$, $g(x)$ and $h(x)$ are point symmetric such that the identities $f(-x) = -f(x)$, $g(-x) = -g(x)$ and $h(-x) = -h(x)$ hold. Hence, it is sufficient to consider only the case of ${x \geq 0}$.
\end{remark}
\begin{remark}
      The triples ${\bigl(\frac{4}{\pi^2}, (1 - \frac{4}{\pi^2} )^2, \frac{4}{\pi^2} \bigr)}$ and ${\bigl(1 - \frac{6}{\pi^2}, (\frac{6}{\pi^2} )^2, \frac{4}{\pi^2} \bigr)}$ are the best possible ones such that the above double inequality holds. In other words, no component of the first triple can be replaced by a smaller value and no component of the second triple can be replaced by a larger value with respect to ${x \geq 0}$ while keeping the other components fixed. In this sense, the double inequality in Theorem~\ref{thm:SharpShaferInequality} is sharp.
\end{remark}

We get a first impression of the nature of the bounds from Figure~\ref{fig:result1}. As we can see the double inequality in Theorem~\ref{thm:SharpShaferInequality} is very tight. The curves seem to be continuous, strictly increasing and convex. Hence, we elaborately discuss the mathematical properties of the obtained bounds in the following.
\begin{figure}	
  \centering
  \includegraphics[width=\bildweite]{./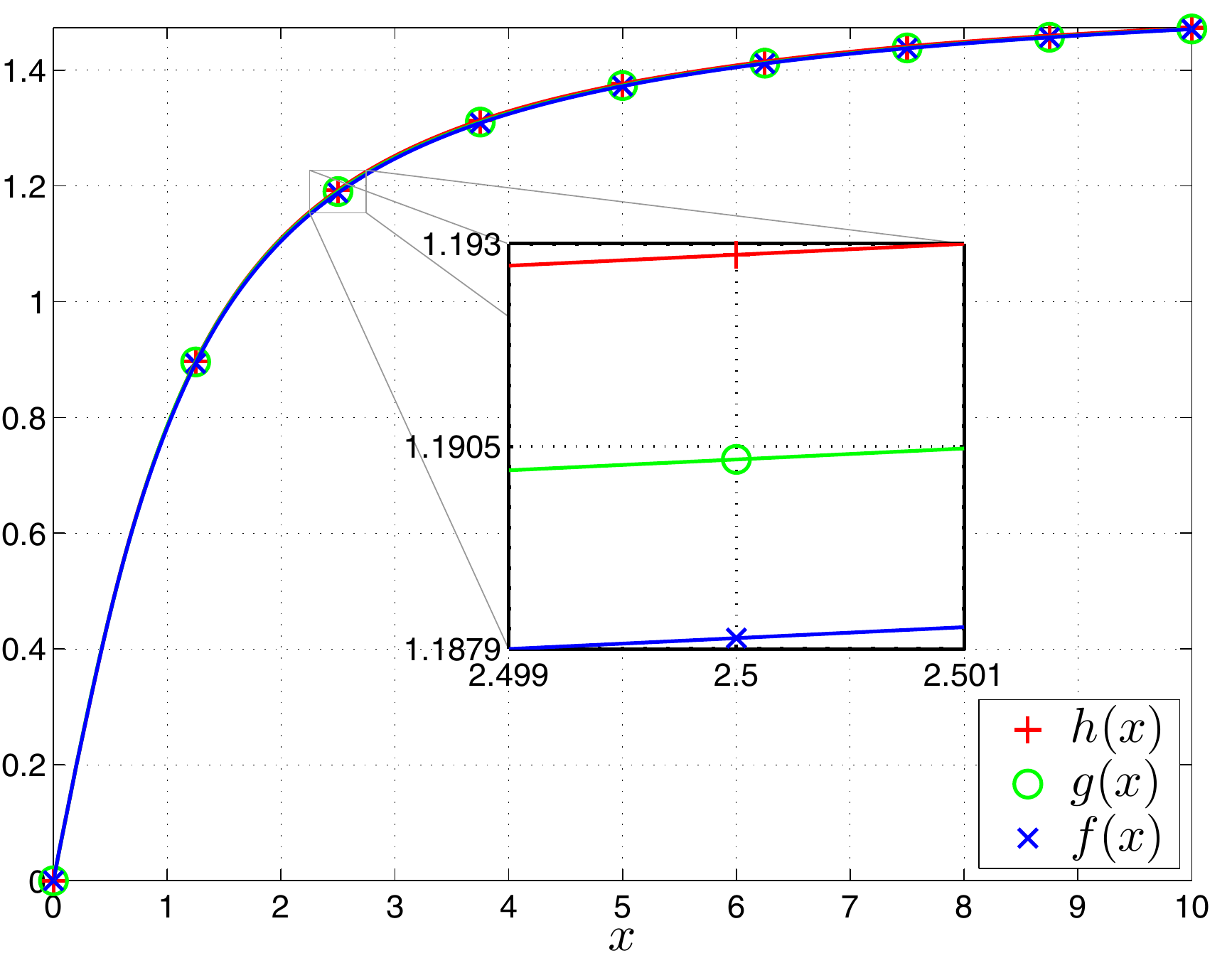}
  \caption{The inverse tangent function and its bounds from Theorem~\ref{thm:SharpShaferInequality} are visualized for the range of ${0 \leq x \leq 10}$. The curves are closely adjacent to one another such that without magnification the differences are not really visible. The curves are equal at zero and approach the same upper limit as $\abs{x}$ approaches infinity.}\label{fig:result1}
\end{figure}

On the one hand, the first three elements in the Taylor series expansions of ${f(x)}$, ${g(x)}$ and ${h(x)}$ as $\abs{x}$ approaches zero are obtained as
      \begin{equation}\label{eq:TaylerExpansionF}
        f(x) \simeq x - \frac{2}{\pi^2-4}\, x^3 + 2 \frac{3 \pi^2 -8}{(\pi^2-4)^3}\, x^5 + \Ord\bigl(x^7\bigr)\ ,
      \end{equation}
      \begin{equation}\label{eq:TaylerExpansionG}
        g(x) \simeq x - \frac{1}{3}\, x^3 + \frac{1}{5}\, x^5 + \Ord\bigl(x^7\bigr)
      \end{equation}
      and
      \begin{equation}\label{eq:TaylerExpansionH}
        h(x) \simeq x - \frac{1}{3}\, x^3 + \frac{\pi^2 +12}{108}\, x^5  + \Ord\bigl(x^7\bigr) \ .
      \end{equation}
\begin{remark}\label{rem:TaylerExpansion}
      Only the both first elements in the Taylor series expansions of ${f(x)}$ and ${g(x)}$ are identical to each other, while in the Taylor series expansions of ${h(x)}$ and ${g(x)}$ the both first two elements are pairwise identical to each other. Thus, ${h(x)}$ achieves a better approximation of ${g(x)}$ than ${f(x)}$ for sufficiently small $\abs{x}$.
\end{remark}
On the other hand, the first three elements in the asymptotic power series expansions of ${f(x)}$, ${g(x)}$ and ${h(x)}$ as $\abs{x}$ approaches infinity are obtained as
      \begin{equation}\label{eq:AsymptoticExpansionF}
        f(x) \simeq \frac{\pi}{2} - \frac{1}{x} - \frac{\pi^4-8 \pi^2 -16}{16 \pi x^2}  + \Ord\bigl(x^{-3}\bigr)\ ,
      \end{equation}
      \begin{equation}\label{eq:AsymptoticExpansionG}
        g(x) \simeq \frac{\pi}{2} - \frac{1}{x} + \frac{1}{3 x^3} + \Ord\bigl(x^{-5}\bigr)
      \end{equation}
      and
      \begin{equation}\label{eq:AsymptoticExpansionH}
        h(x) \simeq \frac{\pi}{2} - \frac{\pi^2 - 6}{4 x} + \frac{\pi^4- 12 \pi^2 +18}{8 \pi x^2} + \Ord\bigl(x^{-3}\bigr)
      \end{equation}
by using the general definition of the asymptotic power series expansion~\cite[p.~11, Definition~1.3.3]{Bleistein} and simple calculations.
\begin{remark}\label{rem:AsymptoticExpansion}
      The both first two elements in the asymptotic power series expansions of ${f(x)}$ and ${g(x)}$ are pairwise identical to each other while in the asymptotic power series expansions of ${h(x)}$ and ${g(x)}$ only the both first elements are identical to each other. Thus, ${f(x)}$ achieves a better approximation of ${g(x)}$ than ${h(x)}$ for sufficiently large $\abs{x}$.
\end{remark}
\begin{corollary}\label{cor:SharpShaferIdentity}
      From equations~\eqref{eq:TaylerExpansionF}--\eqref{eq:AsymptoticExpansionH} we conclude that
      \begin{equation}\label{eq:SharpShaferIdentity}
        \lim\limits_{x \mapsto \pm 0} f(x) = \lim\limits_{x \mapsto \pm 0} g(x) = \lim\limits_{x \mapsto \pm 0} h(x) = 0
      \end{equation}
      and
      \begin{equation}\label{eq:SharpShaferIdentity}
        \lim\limits_{x \mapsto \pm \infty} f(x) = \lim\limits_{x \mapsto \pm \infty} g(x) = \lim\limits_{x \mapsto \pm \infty} h(x) = \pm\frac{\pi}{2} \ .
      \end{equation}
\end{corollary}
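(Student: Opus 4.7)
The plan is to read off the claimed limits directly from the Taylor and asymptotic expansions already established in equations~\eqref{eq:TaylerExpansionF}--\eqref{eq:AsymptoticExpansionH}, supplemented by the point symmetry recorded in the first remark following Theorem~\ref{thm:SharpShaferInequality}. Since the corollary is explicitly presented as a consequence of those expansions, the proof reduces to extracting leading-order terms and applying the odd-symmetry property to transfer the $+\infty$ case to the $-\infty$ case.

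First I would handle the limits as $x \to \pm 0$. Each of the three Taylor expansions \eqref{eq:TaylerExpansionF}, \eqref{eq:TaylerExpansionG}, \eqref{eq:TaylerExpansionH} has vanishing constant term and begins with the linear term $x$; hence each of $f(x)$, $g(x)$, $h(x)$ tends to $0$ from either side as $x\to 0$. As a sanity check I would also verify this directly: substituting $x=0$ into \eqref{eq:SharpShaferInequalityF} and \eqref{eq:SharpShaferInequalityH} gives numerator $0$ and denominator $1$, where the equalities $\sqrt{(1-4/\pi^2)^2}=1-4/\pi^2$ and $\sqrt{(6/\pi^2)^2}=6/\pi^2$ are legitimate since both $1-4/\pi^2$ and $6/\pi^2$ are positive.

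Next I would handle the limit as $x \to +\infty$. Each of the asymptotic power series expansions \eqref{eq:AsymptoticExpansionF}, \eqref{eq:AsymptoticExpansionG}, \eqref{eq:AsymptoticExpansionH} has leading term $\pi/2$, with all subsequent contributions of order $x^{-1}$ or smaller. By the definition of an asymptotic power series expansion~\cite[p.~11, Definition~1.3.3]{Bleistein} recalled earlier, these remainder terms vanish as $x\to +\infty$, so each of $f$, $g$, $h$ tends to $\pi/2$. For $x \to -\infty$ I would invoke the odd symmetry $f(-x)=-f(x)$, $g(-x)=-g(x)$, $h(-x)=-h(x)$ stated in the first remark after Theorem~\ref{thm:SharpShaferInequality}, which flips the common limit to $-\pi/2$.

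There is no serious obstacle here: the corollary is a direct bookkeeping consequence of the expansions already proven in the preceding paragraphs, and the content is exhausted by identifying leading-order behavior. The only delicate points are the correct resolution of the square-root absolute values at $x=0$ (handled by the sign checks $1-4/\pi^2>0$ and $1-6/\pi^2>0$) and the sign of the limit at $-\infty$ (handled by the odd symmetry). Accordingly, I would present the argument in a few lines, essentially as the phrase \emph{``by inspection of \eqref{eq:TaylerExpansionF}--\eqref{eq:AsymptoticExpansionH} and odd symmetry''}.
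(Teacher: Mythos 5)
Your argument is correct and matches the paper's own justification, which consists entirely of the phrase ``From equations~\eqref{eq:TaylerExpansionF}--\eqref{eq:AsymptoticExpansionH} we conclude'': you read the limits at $0$ off the vanishing constant terms of the Taylor expansions and the limits at infinity off the leading term $\pi/2$ of the asymptotic expansions, using odd symmetry to obtain $-\pi/2$ at $-\infty$. Your added sanity checks (the denominators equal $1$ at $x=0$ because $1-\frac{4}{\pi^2}>0$ and $\frac{6}{\pi^2}>0$) are harmless extras that the paper leaves implicit.
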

\begin{lemma}\label{lem:ContinuityOfFH}
  For all ${x \in \setR}$, both bounds $f(x)$ and $h(x)$ are continuous.
\end{lemma}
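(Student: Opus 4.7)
The plan is to verify, in order, (i) that the radicands of the two square roots are strictly positive for every real $x$, (ii) that the denominators of $f$ and $h$ never vanish, and (iii) then conclude continuity by standard results on sums, products, quotients, compositions, and the continuity of $\sqrt{\cdot}$ on $[0,\infty)$.

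For step (i) the radicands are $\bigl(1-\tfrac{4}{\pi^2}\bigr)^2 + \tfrac{4x^2}{\pi^2}$ and $\bigl(\tfrac{6}{\pi^2}\bigr)^2 + \tfrac{4x^2}{\pi^2}$. Both are sums of a strictly positive constant (since $\pi^2 \neq 4$ and $\pi^2 \neq 0$) and a non-negative term $\tfrac{4x^2}{\pi^2}$, hence strictly positive for every $x \in \mathds{R}$. Consequently both square roots are defined on all of $\mathds{R}$ and are themselves continuous, since $\sqrt{\cdot}$ is continuous on $[0,\infty)$ and the inner polynomial is continuous.

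For step (ii) I bound the denominators from below. For $f$, the denominator equals
\begin{equation*}
\tfrac{4}{\pi^2} + \sqrt{\bigl(1-\tfrac{4}{\pi^2}\bigr)^2 + \tfrac{4x^2}{\pi^2}} \ \geq \ \tfrac{4}{\pi^2} + \bigl|1-\tfrac{4}{\pi^2}\bigr| \ = \ \tfrac{4}{\pi^2} + \bigl(1-\tfrac{4}{\pi^2}\bigr) \ = \ 1 \ ,
\end{equation*}
where the middle equality uses $\pi^2 > 4$. For $h$, similarly,
\begin{equation*}
\bigl(1 - \tfrac{6}{\pi^2}\bigr) + \sqrt{\bigl(\tfrac{6}{\pi^2}\bigr)^2 + \tfrac{4x^2}{\pi^2}} \ \geq \ \bigl(1 - \tfrac{6}{\pi^2}\bigr) + \tfrac{6}{\pi^2} \ = \ 1 \ ,
\end{equation*}
using $\pi^2 > 6$ so that $1 - \tfrac{6}{\pi^2} > 0$. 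Thus both denominators are uniformly bounded below by $1$ and in particular do not vanish anywhere on $\mathds{R}$.

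For step (iii) the numerator $x$ is trivially continuous; each denominator is a sum of a constant and a continuous square root of a strictly positive continuous function, hence continuous; by the quotient rule for continuous functions, valid since the denominators are nowhere zero, both $f$ and $h$ are continuous on all of $\mathds{R}$. There is no genuine obstacle here—the only point that requires a moment of care is to notice that the numerical values $\tfrac{4}{\pi^2}$, $1-\tfrac{4}{\pi^2}$, $1-\tfrac{6}{\pi^2}$, and $\tfrac{6}{\pi^2}$ are all strictly positive, so that the bounds on the denominators actually give $1$ rather than a smaller positive constant.
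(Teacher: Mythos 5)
Your proof is correct and follows essentially the same route as the paper's: continuity of numerator and denominator plus the observation that the denominators never vanish. You simply make explicit (with the uniform lower bound of $1$ on each denominator) what the paper asserts in one line.
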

\begin{proof}
  Both numerators and denominators of $f(x)$ and $h(x)$ are continuous functions in $x$ and the denominators are always non-zero which imply the absence of discontinuities.
\end{proof}
\begin{lemma}\label{lem:MonotonicityOfFH}
  For all ${x \in \setR}$, both bounds $f(x)$ and $h(x)$ are strictly increasing.
\end{lemma}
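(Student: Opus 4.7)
The plan is to handle both bounds uniformly by exploiting the common template $\phi(x) \coloneqq \frac{x}{c_1 + \sqrt{c_2 + c_3\,x^2}}$ from~\eqref{eq:functiontype}, of which $f$ and $h$ are particular instances with positive constants. First, I would verify that the constants are all positive: for $f$ this is obvious, while for $h$ the only nontrivial point is $1 - \tfrac{6}{\pi^2} > 0$, which is immediate from $\pi^2 > 6$.

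Next I would prove the general statement that every such $\phi$ with $c_1, c_2, c_3 > 0$ is strictly increasing on $\setR$ by differentiating once. Applying the quotient and chain rules and then combining the numerator over the common factor $\sqrt{c_2 + c_3 x^2}$, the two $c_3 x^2$ contributions cancel and the derivative collapses to
\[
\phi'(x) \;=\; \frac{\,c_1 + \dfrac{c_2}{\sqrt{c_2+c_3 x^2}}\,}{\bigl(c_1+\sqrt{c_2+c_3 x^2}\bigr)^{2}},
\]
which is manifestly strictly positive for every $x \in \setR$ whenever $c_1, c_2, c_3 > 0$. Substituting the two triples of constants displayed in~\eqref{eq:SharpShaferInequalityF} and~\eqref{eq:SharpShaferInequalityH} then settles $f$ and $h$ in a single stroke.

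I do not anticipate a real obstacle here. The one point worth stating carefully is the cancellation of the $c_3 x^2$ terms that turns $\phi'(x)$ into an obviously positive quantity; after that, strict monotonicity extends to all of $\setR$ without any need to split into cases according to the sign of $x$, and the argument covers $f$ and $h$ simultaneously.
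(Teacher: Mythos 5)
Your proposal is correct and follows essentially the same route as the paper: differentiate the common template $\tfrac{x}{c_1+\sqrt{c_2+c_3x^2}}$, observe that the $c_3x^2$ terms cancel so the derivative reduces to a manifestly positive expression for positive $c_1,c_2,c_3$, and conclude for both $f$ and $h$ at once. Your explicit check that $1-\tfrac{6}{\pi^2}>0$ is a small but welcome addition the paper leaves implicit.
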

\begin{proof}
  By differentiation we obtain the following first derivative
  \begin{equation}\label{eq:first_derivative}
    \frac{\mathrm{d}}{\dx} \frac{x}{c_1 + \sqrt{c_2 + c_3 x^2}} =  \frac{c_2 + c_1 \sqrt{c_2 + c_3 x^2}}{\sqrt{c_2 + c_3 x^2}\, \bigl[c_1 + \sqrt{c_2 + c_3 x^2}\, \bigr]^2} \ .
  \end{equation}
  This derivative is positive for all ${x \in \setR}$ because $c_1$, $c_2$ and $c_3$ are positive constants in both bounds. Hence, the bounds are strictly increasing.
\end{proof}
\begin{corollary}
  Both bounds $f(x)$ and $h(x)$ are differentiable on $\setR$, because the first derivative of the bounds exists due to the derivative in equation~\eqref{eq:first_derivative}.
\end{corollary}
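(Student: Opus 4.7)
The plan is to read off differentiability directly from the explicit derivative formula~\eqref{eq:first_derivative} already established in Lemma~\ref{lem:MonotonicityOfFH}. The only thing that has to be verified is that this formula is meaningful at every point of $\setR$, i.e.\ that none of the quantities in it blow up or cause a division by zero.

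First I would observe that both $f(x)$ and $h(x)$ fit the common template ${x/(c_1 + \sqrt{c_2 + c_3 x^2})}$ with ${c_1, c_2, c_3 > 0}$. For $f$ we have ${c_1 = 4/\pi^2}$, ${c_2 = (1-4/\pi^2)^2}$, ${c_3 = 4/\pi^2}$, all strictly positive; for $h$ we have ${c_1 = 1 - 6/\pi^2}$, ${c_2 = (6/\pi^2)^2}$, ${c_3 = 4/\pi^2}$, and ${1 - 6/\pi^2 > 0}$ since ${\pi^2 > 6}$. This reduces the task to a single lemma about the template function.

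Next I would assemble the differentiability of the template function from elementary building blocks. Since ${c_2 > 0}$, the polynomial ${c_2 + c_3 x^2}$ is bounded below by $c_2 > 0$ on all of $\setR$, so the composition $\sqrt{c_2 + c_3 x^2}$ lies entirely in the region where the square root is smooth; by the chain rule it is differentiable on $\setR$. Adding the constant $c_1$ preserves this, and the resulting denominator ${c_1 + \sqrt{c_2 + c_3 x^2}}$ is strictly positive (bounded below by ${c_1 + \sqrt{c_2} > 0}$), hence never zero. Applying the quotient rule to the differentiable numerator $x$ and the differentiable, non-vanishing denominator then yields differentiability of the full expression on $\setR$, and a short calculation reproduces~\eqref{eq:first_derivative}.

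No step is genuinely hard; the only thing to be careful about is confirming the positivity of $c_1$ for $h(x)$, since there the constant $1 - 6/\pi^2$ is not manifestly positive from its symbolic form and requires the numerical inequality ${\pi^2 > 6}$. Once this is in place, the corollary follows immediately, and one may additionally remark that the same argument shows $f$ and $h$ are in fact $C^\infty$ on $\setR$, since every operation involved (polynomial, square root on a positive domain, reciprocal of a positive function) is smooth.
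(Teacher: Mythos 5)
Your argument is correct and follows the same route as the paper: differentiability is read off from the explicit derivative formula~\eqref{eq:first_derivative}, justified by the positivity of $c_1$, $c_2$, $c_3$ and the resulting non-vanishing of the denominator (facts the paper itself invokes in Lemmas~\ref{lem:ContinuityOfFH} and~\ref{lem:MonotonicityOfFH}). Your explicit check that ${1 - 6/\pi^2 > 0}$ for $h(x)$, which the paper's Lemma~\ref{lem:InequalityForPi} supports via ${\pi^2 > 29/3 > 6}$, is a welcome bit of added care but does not change the approach.
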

\begin{corollary}
  Both bounds $f(x)$ and $h(x)$ are limited, due to equation~\eqref{eq:SharpShaferIdentity} and because of the monotonicity in Lemma~\ref{lem:MonotonicityOfFH}.
\end{corollary}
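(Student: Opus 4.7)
The plan is essentially a one-line assembly of the two cited results, so I will outline the ingredients briefly. First, I would invoke Lemma~\ref{lem:MonotonicityOfFH} to note that both $f$ and $h$ are strictly increasing on all of $\setR$. A standard fact about strictly monotone functions of a real variable is that their range is squeezed between the one-sided limits at $\pm\infty$: concretely, for every $x \in \setR$ and any $y < x < z$, one has $f(y) < f(x) < f(z)$, so letting $y \to -\infty$ and $z \to +\infty$ (which is legitimate because a monotone function always admits such limits in the extended real sense) produces $\lim_{y\to-\infty} f(y) \leq f(x) \leq \lim_{z\to+\infty} f(z)$, and analogously for $h$.

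Next, I would feed the limit values supplied by the second identity of Corollary~\ref{cor:SharpShaferIdentity} into these inequalities. Since $\lim_{x\to\pm\infty} f(x) = \lim_{x\to\pm\infty} h(x) = \pm\frac{\pi}{2}$, the sandwich yields the uniform estimate
\begin{equation*}
  -\frac{\pi}{2} \ \leq\ f(x) \ \leq\ \frac{\pi}{2} \qquad \text{and} \qquad -\frac{\pi}{2} \ \leq\ h(x) \ \leq\ \frac{\pi}{2}
\end{equation*}
for every ${x \in \setR}$, which is precisely the asserted boundedness (limitedness) of both bounds, with $\frac{\pi}{2}$ as a uniform bound on $\abs{f}$ and $\abs{h}$.

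There is no real obstacle here. The only technical ingredient beyond the quoted results is the elementary fact that a monotone real function attains its supremum and infimum as limits at $\pm\infty$, which is a standard consequence of the monotone convergence property. Since Corollary~\ref{cor:SharpShaferIdentity} already identifies these limits explicitly and Lemma~\ref{lem:MonotonicityOfFH} already supplies the monotonicity, no further estimation is needed and the proof reduces to the short combination above.
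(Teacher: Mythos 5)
Your argument is correct and is exactly the reasoning the paper intends: the corollary's justification is precisely the combination of the strict monotonicity from Lemma~\ref{lem:MonotonicityOfFH} with the limits $\pm\frac{\pi}{2}$ at $\pm\infty$ from Corollary~\ref{cor:SharpShaferIdentity}, yielding $\abs{f(x)} \leq \frac{\pi}{2}$ and $\abs{h(x)} \leq \frac{\pi}{2}$ for all $x \in \setR$. You have simply made explicit the sandwiching step that the paper leaves implicit, so there is nothing to add.
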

\begin{corollary}
  Both bounds $f(x)$ and $h(x)$ do not have any critical points, because they are strictly increasing and differentiable on $\setR$.
\end{corollary}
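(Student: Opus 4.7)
The plan is to invoke the two preceding corollaries and the explicit derivative formula already displayed in equation~\eqref{eq:first_derivative}, since the claim is essentially immediate once one has differentiability together with strict monotonicity. Recall that a critical point of a function defined on $\setR$ is a point at which the function fails to be differentiable or at which its derivative vanishes. Thus I would proceed by ruling out each of these two possibilities in turn.

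First, I would appeal to the corollary preceding this one, which states that $f(x)$ and $h(x)$ are differentiable on all of $\setR$. This immediately eliminates the first source of critical points, namely points of non-differentiability. Second, I would apply Lemma~\ref{lem:MonotonicityOfFH}: the explicit computation~\eqref{eq:first_derivative} in that proof shows that for any positive constants $c_1$, $c_2$, $c_3$ the derivative
\begin{equation*}
   \frac{c_2 + c_1 \sqrt{c_2 + c_3 x^2}}{\sqrt{c_2 + c_3 x^2}\, \bigl[c_1 + \sqrt{c_2 + c_3 x^2}\, \bigr]^2}
\end{equation*}
is strictly positive for every ${x \in \setR}$, and hence never vanishes. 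Substituting the particular triples associated with $f(x)$ and $h(x)$ (both of which have all three coefficients positive, as required in Theorem~\ref{thm:SharpShaferInequality}) yields that $f'(x) > 0$ and $h'(x) > 0$ for every real $x$. This rules out the second source of critical points.

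Combining these two observations, no $x \in \setR$ can be a critical point of either $f$ or $h$, establishing the corollary. There is no real obstacle in this argument; the statement is essentially a direct bookkeeping consequence of the previously proven monotonicity lemma and differentiability corollary, and the only thing to be careful about is to state explicitly the definition of "critical point" being used so that the reader sees that both of the usual defining conditions (non-existence of the derivative, or vanishing of the derivative) have been excluded.
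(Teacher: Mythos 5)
Your proposal is correct and follows essentially the same route as the paper, which justifies the corollary in one line by citing differentiability and the strict monotonicity established in Lemma~\ref{lem:MonotonicityOfFH}. Your version is in fact slightly more careful: you correctly observe that strict monotonicity alone would not exclude a vanishing derivative (as $x^3$ shows), and instead invoke the strict positivity of the explicit derivative~\eqref{eq:first_derivative}, which is the actual content of the lemma's proof that the paper is implicitly relying on.
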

\begin{lemma}\label{lem:ConvexityOfFH}
  For all ${x \geq 0}$, both bounds $f(x)$ and $h(x)$ are concave while for all ${x \leq 0}$, both bounds are convex.
\end{lemma}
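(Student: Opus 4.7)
The plan is to reduce the claim to a single calculation on the common functional form
\[
\phi(x) \coloneqq \frac{x}{c_1 + \sqrt{c_2 + c_3 x^2}}, \qquad c_1,c_2,c_3 > 0,
\]
since both $f$ and $h$ are instances of $\phi$ with explicit positive constants (for $f$: $c_1 = 4/\pi^2$, $c_2 = (1-4/\pi^2)^2$, $c_3 = 4/\pi^2$; for $h$: $c_1 = 1-6/\pi^2$, $c_2 = (6/\pi^2)^2$, $c_3 = 4/\pi^2$; positivity of $1-6/\pi^2$ follows from $\pi^2 > 6$). It therefore suffices to show that $\phi''(x) \leq 0$ for $x \geq 0$ and $\phi''(x) \geq 0$ for $x \leq 0$, independently of the particular positive triple $(c_1,c_2,c_3)$.

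First I would differentiate the expression already computed in equation~\eqref{eq:first_derivative}. Introducing the shorthand $u \coloneqq \sqrt{c_2 + c_3 x^2} > 0$ with $u'(x) = c_3 x/u$, the first derivative reads
\[
\phi'(x) \;=\; \frac{c_2 + c_1 u}{u\,(c_1+u)^2}.
\]
The advantage of the substitution is that $\phi'$ now depends on $x$ only through $u$, so that $\phi''(x) = (d\phi'/du) \cdot u'(x)$, which keeps the bookkeeping short.

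Next I would carry out the single-variable differentiation in $u$ using the quotient rule. After pulling out the common factor $(c_1+u)$ from numerator and denominator, the derivative in $u$ simplifies to
\[
\frac{d\phi'}{du} \;=\; -\,\frac{2 c_1 u^2 + 3 c_2 u + c_1 c_2}{u^2 (c_1+u)^3},
\]
so that
\[
\phi''(x) \;=\; -\,\frac{c_3\,x\,\bigl(2 c_1 u^2 + 3 c_2 u + c_1 c_2\bigr)}{u^3 (c_1+u)^3}.
\]
Because $c_1,c_2,c_3,u$ are strictly positive, both the coefficient $c_3$ and the polynomial $2 c_1 u^2 + 3 c_2 u + c_1 c_2$ are positive, as is the denominator $u^3(c_1+u)^3$. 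Hence the sign of $\phi''(x)$ is exactly opposite to the sign of $x$, which is precisely the statement of the lemma.

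The only real obstacle is keeping the algebra clean in computing $d\phi'/du$; the factor $(c_1+u)$ must cancel cleanly so that the residual polynomial in the numerator has all positive coefficients. Once that cancellation is performed, the sign analysis is immediate, and no separate argument for $f$ and $h$ is needed because the positivity of the three constants is the only property of the triples that enters the proof.
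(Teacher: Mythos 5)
Your proposal is correct and follows essentially the same route as the paper: compute the second derivative of the generic form $x/(c_1+\sqrt{c_2+c_3x^2})$ with positive constants and observe that its sign is opposite to that of $x$. Your expression $-c_3x\,(2c_1u^2+3c_2u+c_1c_2)/[u^3(c_1+u)^3]$ agrees exactly with the paper's displayed second derivative after substituting $u^2=c_2+c_3x^2$; the substitution in $u$ is just a cleaner way of organizing the same calculation.
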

\begin{proof}
  By differentiation we obtain the following second derivative
  \begin{multline}
    \frac{\mathrm{d}^2}{\dx^2} \frac{x}{c_1 + \sqrt{c_2 + c_3 x^2}} \\
    = - c_3\, x\,  \frac{3 c_1 c_2 + 2 c_1 c_3 x^2 + 3 c_2 \sqrt{c_2 + c_3 x^2}}{( c_2 + c_3 x^2 )^{\sfrac{3}{2}}\, \bigl[c_1 + \sqrt{c_2 + c_3 x^2}\, \bigr]^3} \ .
  \end{multline}
  The sign of this derivative is only dependent on $x$ because $c_1$, $c_2$ and $c_3$ are positive constants in both bounds. Hence, this derivative is non-positive for all ${x \geq 0}$ and non-negative for all ${x \leq 0}$ which completes the proof.
\end{proof}
\begin{corollary}
  Both bounds $f(x)$ and $h(x)$ have the same unique inflection point at the origin, due to opposing convexities for ${x \geq 0}$ and ${x \leq 0}$, see Lemma~\ref{lem:ConvexityOfFH}.
\end{corollary}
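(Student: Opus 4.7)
The plan is to combine the sign analysis of the second derivative already computed inside the proof of Lemma~\ref{lem:ConvexityOfFH} with the definition of an inflection point. Recall that an inflection point of a twice differentiable function is a point at which the second derivative changes sign, equivalently, a point at which concavity reverses. Since both $f(x)$ and $h(x)$ are of the generic form $x / \bigl[ c_1 + \sqrt{c_2 + c_3 x^2} \bigr]$ with $c_1, c_2, c_3 > 0$, I would invoke the explicit formula for the second derivative displayed in the proof of Lemma~\ref{lem:ConvexityOfFH} and work directly with it.

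Next, I would examine the sign of that expression. It factors as $-c_3 x$ multiplied by a fraction whose denominator $( c_2 + c_3 x^2 )^{3/2}\, \bigl[ c_1 + \sqrt{c_2 + c_3 x^2} \, \bigr]^3$ is strictly positive for every $x \in \setR$, and whose numerator $3 c_1 c_2 + 2 c_1 c_3 x^2 + 3 c_2 \sqrt{c_2 + c_3 x^2}$ is a sum of strictly positive terms, hence is bounded below by $3 c_1 c_2 + 3 c_2 \sqrt{c_2} > 0$. Consequently, the sign of the second derivative coincides with the sign of $-x$: strictly positive for $x < 0$, zero precisely at $x = 0$, and strictly negative for $x > 0$.

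From this I would conclude that the concavity of each bound actually flips at the origin, in line with Lemma~\ref{lem:ConvexityOfFH}, so $x = 0$ is an inflection point; and since the second derivative has no other zero, this inflection point is unique. The reasoning depends only on the positivity of $c_1, c_2, c_3$, which holds for both triples $\bigl( \tfrac{4}{\pi^2}, (1 - \tfrac{4}{\pi^2})^2, \tfrac{4}{\pi^2} \bigr)$ and $\bigl( 1 - \tfrac{6}{\pi^2}, (\tfrac{6}{\pi^2})^2, \tfrac{4}{\pi^2} \bigr)$ (note $1 - \tfrac{6}{\pi^2} > 0$), so the same conclusion applies verbatim to $f(x)$ and $h(x)$, showing that the two bounds share the same unique inflection point at the origin.

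There is no serious obstacle: the heavy lifting was done in Lemma~\ref{lem:ConvexityOfFH}. The only point requiring mild care is to verify that the factor accompanying $-c_3 x$ is \emph{strictly} positive (not merely non-negative) on $\setR$, which is what guarantees that $x = 0$ is an \emph{isolated} zero of the second derivative and thus the \emph{unique} inflection point.
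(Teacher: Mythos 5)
Your proposal is correct and follows essentially the same route as the paper, which states this corollary as an immediate consequence of the second-derivative computation in the proof of Lemma~\ref{lem:ConvexityOfFH}. You add one genuinely useful refinement the paper leaves implicit: the lemma only asserts the second derivative is non-positive for $x \geq 0$ and non-negative for $x \leq 0$, whereas uniqueness of the inflection point requires the strict sign away from the origin, which you obtain by checking that the factor multiplying $-c_3 x$ is strictly positive (including the verification that $c_1 = 1 - \tfrac{6}{\pi^2} > 0$ for $h$).
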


In the following enumeration, we now summarize the properties of the bounds that have been shown, thus far.
\begin{enumerate}
  \item The bounds are equal only at zero and they approach the same limit as $\abs{x}$ approaches infinity.
  \item Both bounds are point symmetric, continuous, strictly increasing, differentiable, and limited.
  \item They are convex for all ${x\leq0}$ and concave otherwise.
  \item There are no critical points.
  \item Both bounds have the same unique inflection point.
\end{enumerate}

The above properties enable us to use the proposed bounds suitable in future works. It remains to show the tightness of the bounds with respect to the inverse tangent function. For this purpose we deduce an upper-bound on the actual relative errors of the bounds, in the following.
\begin{definition}\label{def:RelativeError}
  For all ${x \in \setR}$, the relative errors of the bounds given in Theorem~\ref{thm:SharpShaferInequality} are defined by
  \begin{equation}\label{eq:DefRelativeErrorLower}
    r_f(x) \coloneqq \frac{g(x) - f(x)}{g(x)}
  \end{equation}
  and
  \begin{equation}\label{eq:DefRelativeErrorUpper}
    r_h(x) \coloneqq \frac{h(x) - g(x)}{g(x)} \ .
  \end{equation}
\end{definition}

Note that ${x=0}$ is a removable singularity for both last ratios because of approximations~\eqref{eq:TaylerExpansionF},~\eqref{eq:TaylerExpansionG} and~\eqref{eq:TaylerExpansionH}. Thus, $r_f(x)$ and $r_h(x)$ are continuously extendable over ${x=0}$. All following fractions are also continuously extendable over ${x=0}$ in a similar manner so that no difficulties related to singularities occur, hereinafter.
\begin{figure}	
  \centering
  \includegraphics[width=\bildweite]{./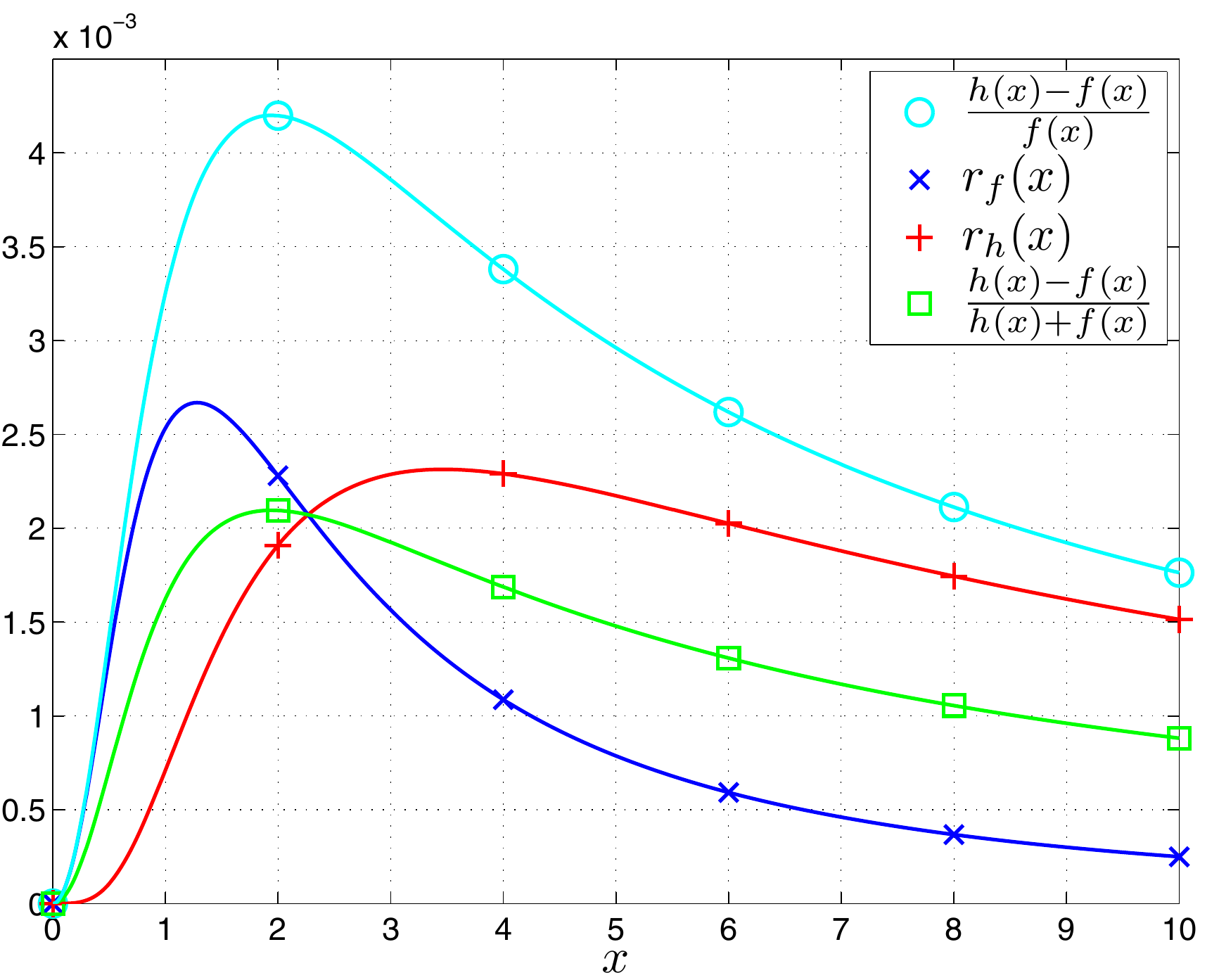}
  \caption{The relative errors of the bounds for the inverse tangent function are visualized for the range of ${0 \leq x \leq 10}$. All curves are equal at zero and they approach  zero as $\abs{x}$ approaches infinity. $r_h(x)$ is smaller, and hence, better than $r_f(x)$ for sufficiently small values of $\abs{x}$, while for sufficiently large values of $\abs{x}$ the opposite holds. The relative error $r_f(x)$ is approximately smaller than $0.27\%$ while $r_h(x)$ is approximately smaller than $0.23\%$. The ratio $[h(x)-f(x)]/[h(x)+f(x)]$ lies between $r_f(x)$ and $r_h(x)$, and hence, is an upper bound for $\min\{r_f(x), r_h(x) \}$ while ${[h(x)-f(x)]/f(x)}$ is an upper bound for $\max\{r_f(x), r_h(x) \}$.}\label{fig:result2}
\end{figure}
\begin{theorem}\label{thm:RelativeError}
      For all ${x \in \setR}$, the inequalities
      \begin{multline}\label{eq:RelativeErrorMax}
        \max\bigl\{r_f(x), r_h(x)\bigr\} \leq \frac{h(x)-f(x)}{f(x)} \\ = \frac{10-\pi^2-2 \sqrt{9+\pi^2 x^2}+\sqrt{\left(\pi^2-4\right)^2 + 4 \pi^2 x^2}}{\pi^2-6+2 \sqrt{9+\pi^2 x^2}}
      \end{multline}
      and
      \begin{multline}\label{eq:RelativeErrorMin}
        \min\bigl\{r_f(x), r_h(x)\bigr\} \leq \frac{h(x)-f(x)}{h(x)+f(x)} \\ = \frac{10-\pi^2-2 \sqrt{9+\pi^2 x^2}+\sqrt{ \left(\pi^2-4\right)^2 + 4 \pi^2 x^2}}{\pi^2-2+2 \sqrt{9+\pi ^2 x^2}+\sqrt{\left(\pi^2-4\right)^2 + 4 \pi^2 x^2}}\\
         \leq \max\bigl\{r_f(x), r_h(x)\bigr\}
      \end{multline}
      hold.
\end{theorem}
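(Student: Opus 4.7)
The plan is to reduce Theorem~\ref{thm:RelativeError} to elementary algebraic manipulations once the ordering $0 \leq f(x) \leq g(x) \leq h(x)$ from Theorem~\ref{thm:SharpShaferInequality} is in hand. Since $f$, $g$, $h$ are odd and the relative errors $r_f$, $r_h$ are therefore even, together with the continuous extensions over the removable singularity at $x=0$ already noted in the paper, it suffices to establish the inequalities for $x > 0$, where all three functions are strictly positive and $f \leq g \leq h$.

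For the bound in~\eqref{eq:RelativeErrorMax}, I would argue by a short chain of elementary inequalities. The estimate $r_f = (g-f)/g \leq (g-f)/f \leq (h-f)/f$ uses only $f \leq g$ (so $1/g \leq 1/f$) together with $g \leq h$. The estimate $r_h = (h-g)/g \leq (h-f)/f$ cross-multiplies to $f(h-g) \leq g(h-f)$, which simplifies to $fh \leq gh$, i.e., again $f \leq g$. Both ratios, and hence their maximum, are therefore bounded by $(h-f)/f$. For the two-sided bound~\eqref{eq:RelativeErrorMin} around $(h-f)/(h+f)$, the key observation is that $r_f \leq r_h$ if and only if $g - f \leq h - g$, i.e.\ $2g \leq h+f$. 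A case distinction on the sign of $(h+f) - 2g$ then handles everything symmetrically: in the case $2g \leq h+f$ (so $r_f \leq r_h$), the two sub-inequalities $r_f \leq (h-f)/(h+f)$ and $(h-f)/(h+f) \leq r_h$ cross-multiply to $2gf \leq f(h+f)$ and $2gh \leq h(h+f)$, respectively, both of which collapse to $2g \leq h+f$ after cancelling the positive factor $f$ or $h$; the other case is mirror-symmetric.

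The last step is to turn the abstract bounds $(h-f)/f$ and $(h-f)/(h+f)$ into the closed-form expressions in the theorem. Writing $f(x) = x/D_f(x)$ and $h(x) = x/D_h(x)$ with denominators read off from \eqref{eq:SharpShaferInequalityF} and~\eqref{eq:SharpShaferInequalityH}, one immediately has $(h-f)/f = (D_f - D_h)/D_h$ and $(h-f)/(h+f) = (D_f - D_h)/(D_f + D_h)$. Multiplying numerator and denominator by $\pi^2$ turns the inner square roots into $\sqrt{(\pi^2-4)^2 + 4\pi^2 x^2}$ and $2\sqrt{9 + \pi^2 x^2}$, and the stated formulas fall out directly. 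The only real obstacle is the careful bookkeeping in this final algebraic simplification; the mathematical content lies entirely in the ordering from Theorem~\ref{thm:SharpShaferInequality} and the elementary case split above.
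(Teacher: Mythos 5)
Your proposal is correct and follows essentially the same route as the paper: the first bound via the chain $f\le g\le h$, and the second via a case split on the sign of $(h+f)-2g$, which is exactly the paper's decomposition into the sets $\setI_f$ and $\setI_h$ in Corollary~\ref{cor:InequalityG}. Your reduction to $x>0$ by evenness of $r_f$ and $r_h$ is a slightly cleaner way of handling the negative axis than the paper's ``inequalities reverse'' remark, but the substance is identical.
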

\begin{proof}
  See Appendix~\ref{sec:Appendix_B}.
\end{proof}

Note, that the inequalities in Theorem~\ref{thm:RelativeError} do not contain the inverse tangent function, at all.

In Figure~\ref{fig:result2}, the relative errors of the obtained bounds are shown. The maximum relative errors of the bounds are approximately smaller than ${0.27\%}$ and ${0.23\%}$ for $f(x)$ and $h(x)$, respectively. It is worthwhile mentioning that both bounds are valid for the whole domain of real numbers.

\section{Conclusion}
\label{sec:Conclusion}
In the present work, we have investigated the approximation of the inverse tangent function and deduced two new bounds. We have derived a lower and an upper bound with simple closed-form formulae which are sharp and very accurate. Furthermore, we have presented some useful and important properties of the obtained bounds. These properties can be necessary in future works. Moreover, we have investigated the relative errors of the proposed bounds. The corresponding maximum relative errors of the bounds are approximately smaller than $0.27\%$ and $0.23\%$ for the lower bound and upper bound, respectively. These values show that the obtained bounds are very accurate and thus are suitably applicable in the most engineering problems. Finally, we have illustrated some results in order to visualize the achieved gains.

\appendices
\section{Proof of the Bounds}\label{sec:Appendix_A}
\begin{lemma}\label{lem:InequalityForPi}
      The transcendental number $\pi^2$ can be bounded by the double inequality
      \begin{equation}\label{eq:InequalityForPi}
        \frac{29}{3} \ < \ \pi^2 \ < \ 10 \ .
      \end{equation}
\end{lemma}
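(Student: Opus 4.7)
The plan is to establish the two inequalities separately, each by producing a convenient rational sandwich of $\pi$ and then squaring.

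For the upper bound $\pi^2 < 10$, I would invoke Archimedes' classical estimate $\pi < 22/7$, which admits a clean self-contained proof via the Dalzell identity
\[
  \frac{22}{7} - \pi \;=\; \int_0^1 \frac{x^4(1-x)^4}{1+x^2}\, dx,
\]
whose integrand is strictly positive on $(0,1)$. Squaring gives $\pi^2 < (22/7)^2 = 484/49$, and $484/49 < 10$ is equivalent to $484 < 490$, which is immediate.

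For the lower bound $\pi^2 > 29/3$, it is enough to exhibit any rational lower bound on $\pi$ whose square exceeds $29/3$. A convenient threshold is $311/100$: one checks $3 \cdot 311^2 = 290163 > 290000 = 29 \cdot 100^2$, so $(311/100)^2 > 29/3$, and the task reduces to proving $\pi > 3.11$. This is well within reach of any standard elementary method — Archimedes' inscribed-polygon construction applied to an $n$-gon with $n \geq 24$ gives $\pi > 3.13$ in closed form, and a short truncation of a Machin-type arctan identity achieves the same with an explicit positive tail. Squaring then yields $\pi^2 > (3.11)^2 = 9.6721 > 9.6\overline{6} = 29/3$.

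In short, this lemma is essentially a bookkeeping device so that later algebraic comparisons involving $\pi^2$ can be reduced to comparisons with the simple rationals $29/3$ and $10$. I do not anticipate any genuine obstacle; the only delicate point, if one insists on complete self-containment, is to produce the lower bound on $\pi$ rigorously rather than by citation of a decimal expansion, and an Archimedean polygonal estimate dispatches that in a few lines.
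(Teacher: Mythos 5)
Your proposal is correct, but it proves the lemma by a genuinely different route than the paper. For the lower bound the paper does not bound $\pi$ itself at all: it starts from the Basel identity $\sum_{k\geq1}k^{-2}=\pi^2/6$, accelerates the series by subtracting the telescoping sum $\sum_{k\geq1}\frac{1}{k(k+1)}=1$ to get $\frac{\pi^2}{6}=1+\sum_{k\geq1}\frac{1}{k^2(k+1)}$, and then observes that the first few terms already give $1+\frac12+\frac1{12}+\frac1{36}=\frac{29}{18}$, whence $\pi^2>\frac{29}{3}$ by pure rational arithmetic. This is fully self-contained (granting $\zeta(2)=\pi^2/6$) and sidesteps exactly the step you flag as delicate, namely producing a rigorous rational lower bound on $\pi$; your route via $\pi>3.11$ is sound but defers the real work to an Archimedean polygon or a truncated Machin series, where one must still verify that a nested radical such as $24\sin(\pi/24)$ exceeds a rational threshold. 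Conversely, for the upper bound the paper merely cites the literature, whereas your Dalzell integral $\frac{22}{7}-\pi=\int_0^1\frac{x^4(1-x)^4}{1+x^2}\,dx$ followed by $484<490$ is self-contained and arguably an improvement on a citation. Both arguments are valid; the paper's lower-bound trick is worth knowing because it turns a transcendental estimate into a four-term partial sum, while your version keeps everything at the level of classical rational approximations of $\pi$.
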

\begin{proof}\label{proof:InequalityForPi}
      Both bounds are well known for long, see for example~\cite{Neugebauer}. A new proof of the upper bound can be found in~\cite{Elkies}. We here give an elementary proof of the lower bound. The identities ${1 = \sum_{k=1}^{\infty} \frac{1}{k (k+1)}}$ and ${\zeta(2) = \sum_{k=1}^{\infty} \frac{1}{k^2} = \frac{\pi^2}{6}}$, see for example~\cite[p.~8, eq.~0.233.3 and p.~12, eq.~0.244.3]{Gradshteyn}, are used to deduce
      \begin{multline}
        \frac{\pi^2}{6} = 1 + \sum\limits_{k=1}^{\infty} \frac{1}{k^2} - \sum\limits_{k=1}^{\infty} \frac{1}{k (k+1)} = 1 + \sum\limits_{k=1}^{\infty} \frac{1}{k^2 (k+1)}\\
        = \underbrace{1 + \frac{1}{2} + \frac{1}{12} + \frac{1}{36}}_{=\frac{29}{18}} + \sum\limits_{k=4}^{\infty} \frac{1}{k^2 (k+1)} > \frac{29}{18} \ .
      \end{multline}
      Hence ${\frac{29}{3} < \pi^2}$ follows.
\end{proof}
In the following, we denote the differences ${g(x)-f(x)}$ and ${h(x)-g(x)}$ by
      \begin{equation}\label{eq:DifferenceF}
        \Delta_f(x) \coloneqq  g(x)-f(x)
      \end{equation}
      and
      \begin{equation}\label{eq:DifferenceH}
        \Delta_h(x) \coloneqq  h(x)-g(x) \ ,
      \end{equation}
      respectively. From Corollary~\ref{cor:SharpShaferIdentity} it is immediately deduced that
      \begin{multline}\label{eq:DifferenceIdentity}
        \lim\limits_{x \mapsto \pm 0} \Delta_f(x) = \lim\limits_{x \mapsto \pm 0} \Delta_h(x) \\
        = \lim\limits_{x \mapsto \pm \infty} \Delta_f(x) = \lim\limits_{x \mapsto \pm \infty} \Delta_h(x) = 0 \ .
      \end{multline}
      By direct algebra the first derivatives of $\Delta_f(x)$ and $\Delta_h(x)$ are given as
      \begin{multline}\label{eq:DerivativeOfDifferenceF}
        \frac{\mathrm{d} \Delta_f(x)}{\dx} = \frac{1}{1+x^2}
        - \frac{1}{\frac{4}{\pi^2} + \sqrt{\bigl(1 - \frac{4}{\pi^2}\bigr)^2 + \frac{4 x^2}{\pi^2}}}\\
        + \frac{x^2}{\frac{\pi^2}{4} \sqrt{\bigl(1 - \frac{4}{\pi^2}\bigr)^2 + \frac{4 x^2}{\pi^2}}\, \Bigl(\frac{4}{\pi^2} + \sqrt{\bigl(1 - \frac{4}{\pi^2}\bigr)^2 + \frac{4 x^2}{\pi^2}}\Bigr)^2}
      \end{multline}
      and
      \begin{multline}\label{eq:DerivativeOfDifferenceH}
        \frac{\mathrm{d} \Delta_h(x)}{\dx} = -\frac{1}{1+x^2}
        + \frac{1}{ 1 - \frac{6}{\pi^2}  + \sqrt{\bigl(\frac{6}{\pi^2}\bigr)^2 + \frac{4 x^2}{\pi^2}}} \\
        - \frac{x}{\frac{\pi^2}{4} \sqrt{\bigl(\frac{6}{\pi^2}\bigr)^2 + \frac{4 x^2}{\pi^2}}\, \Bigl( 1 - \frac{6}{\pi^2}  + \sqrt{\bigl(\frac{6}{\pi^2}\bigr)^2 + \frac{4 x^2}{\pi^2}}\Bigr)^2} \ ,
      \end{multline}
      respectively.
\begin{corollary}\label{cor:RootsOfDerivativeOfDifference}
      The first derivatives of $\Delta_f(x)$ and $\Delta_h(x)$ vanish only at three real points, namely
      \begin{equation}\label{eq:RootsOfDerivativeOfDifferenceF}
        x_f \in \biggl\{0,\, \pm \frac{(\pi^2-4) \sqrt{-2 \pi^4 +36 \pi^2 - 160}}{\pi^4 - 8 \pi^2 -16} \, \biggr\}
      \end{equation}
      and
      \begin{equation}\label{eq:RootsOfDerivativeOfDifferenceH}
        x_h \in \biggl\{0,\, \pm \frac{\sqrt{-5 \pi^4 + 108 \pi^2 - 576}}{\pi (10 - \pi^2)} \, \biggr\} \ ,
      \end{equation}
      respectively.
\end{corollary}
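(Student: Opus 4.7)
The plan is to solve the equations $\Delta_f'(x) = 0$ and $\Delta_h'(x) = 0$ algebraically by reducing each to a polynomial equation in $u := x^2$. Using $g'(x) = 1/(1+x^2)$ together with the closed-form derivative from equation~\eqref{eq:first_derivative}, each equation can be rewritten as $\frac{1}{1+x^2} = \frac{c_2 + c_1 R}{R(c_1+R)^2}$ with $R := \sqrt{c_2 + c_3 x^2}$ and the appropriate triple $(c_1,c_2,c_3)$. Cross-multiplying yields $R(c_1+R)^2 = (1+x^2)(c_2 + c_1 R)$; expanding and using $R^2 = c_2 + c_3 x^2$ to eliminate $R^3$ reduces this to a linear equation in $R$ of the form $R\,P(u) = Q(u)$ where $P$ and $Q$ are degree-one polynomials in $u$ whose coefficients depend explicitly on the triple.

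Squaring then eliminates the surd and produces the polynomial identity $(c_2 + c_3 u)\,P(u)^2 = Q(u)^2$, which is cubic in $u$. A direct substitution shows that $u = 0$ is a root (reflecting $\Delta'(0)=0$), so the cubic factors as $u \cdot \tilde{E}(u) = 0$ with $\tilde{E}$ quadratic. Solving $\tilde{E}(u)=0$ with the quadratic formula and substituting the two specific triples $\bigl(\tfrac{4}{\pi^2},(1-\tfrac{4}{\pi^2})^2,\tfrac{4}{\pi^2}\bigr)$ and $\bigl(1-\tfrac{6}{\pi^2},(\tfrac{6}{\pi^2})^2,\tfrac{4}{\pi^2}\bigr)$ should, after collecting powers of $\pi$, produce exactly the positive values $u_f^\star$ and $u_h^\star$ that appear under the square roots in~\eqref{eq:RootsOfDerivativeOfDifferenceF} and~\eqref{eq:RootsOfDerivativeOfDifferenceH}; taking $x = \pm\sqrt{u^\star}$ then recovers the stated nonzero roots.

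To finish one must perform three sanity checks: (i)~the candidate $u^\star$ is positive, so that $\pm\sqrt{u^\star}$ is real; (ii)~the squaring step did not introduce a spurious solution, which reduces to confirming $Q(u^\star)/P(u^\star) > 0$; and (iii)~the second root of the quadratic $\tilde{E}(u)$ is either negative or fails the sign check, so that no additional real root of $\Delta'(x)=0$ arises. The positivity in (i) amounts to $-2\pi^4 + 36\pi^2 - 160 > 0$ for $f$ and $-5\pi^4 + 108\pi^2 - 576 > 0$ for $h$; these factor as $-2(\pi^2-8)(\pi^2-10)$ and $-5(\pi^2-\tfrac{48}{5})(\pi^2-12)$ respectively, so by Lemma~\ref{lem:InequalityForPi} the estimate $\tfrac{29}{3} < \pi^2 < 10$ places $\pi^2$ inside both positivity intervals $(8,10)$ and $(\tfrac{48}{5},12)$, noting $\tfrac{48}{5} = 9.6 < \tfrac{29}{3}$.

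The main obstacle will be the algebraic bookkeeping: writing $P(u)$, $Q(u)$ and the coefficients of $\tilde{E}(u)$ in closed form without sign errors, then simplifying the radicals to match the published expressions. A secondary delicate point is the use of Lemma~\ref{lem:InequalityForPi}: the critical threshold $\tfrac{48}{5} = 9.6$ lies extremely close to $\pi^2 \approx 9.87$, so a coarser lower bound such as $\pi^2 > 9$ would not suffice, and the sharp bound $\tfrac{29}{3} < \pi^2$ from that lemma is genuinely needed to guarantee that the radicand in~\eqref{eq:RootsOfDerivativeOfDifferenceH} is positive.
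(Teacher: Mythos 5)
Your proposal matches the paper's own proof in both substance and structure: the nonzero roots are found by direct algebraic solution of $\Delta_f'(x)=0$ and $\Delta_h'(x)=0$, and their reality is established from the factorizations $-2\pi^4+36\pi^2-160 = 2(\pi^2-8)(10-\pi^2)$ and $-5\pi^4+108\pi^2-576=(5\pi^2-48)(12-\pi^2)$ combined with the bound $\tfrac{29}{3}<\pi^2<10$ of Lemma~\ref{lem:InequalityForPi}, exactly as in Appendix~A. Your additional checks on spurious roots introduced by squaring are a welcome explicitation of what the paper compresses into ``direct calculations,'' but they do not change the route.
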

\begin{proof}\label{proof:DerivativeOfDifference}
      We set~\eqref{eq:DerivativeOfDifferenceF} and~\eqref{eq:DerivativeOfDifferenceH} equal to zero and obtain the points in~\eqref{eq:RootsOfDerivativeOfDifferenceF} and~\eqref{eq:RootsOfDerivativeOfDifferenceH} by direct calculations. It remains to prove that all points are real. This is done by showing that the discriminant functions \begin{equation}
        y_f(\nu) \coloneqq -2 \nu^4 +36 \nu^2 - 160 = 2 (\nu^2 - 8)(10 - \nu^2)
      \end{equation}
      and
      \begin{equation}
        y_h(\nu) \coloneqq -5 \nu^4 + 108 \nu^2 - 576 = (5 \nu^2 - 48)(12 - \nu^2)
      \end{equation}
      are non-negative for ${\nu=\pi}$. A curve tracing of $y_f(\nu)$ and $y_h(\nu)$ leads to the relationships
      \begin{equation}\label{eq:CurveSketchingF}
        y_f(\nu) \geq 0 \quad\Leftrightarrow\quad 8 \leq \nu^2 \leq 10
      \end{equation}
      and
      \begin{equation}\label{eq:CurveSketchingH}
        y_h(\nu) \geq 0 \quad\Leftrightarrow\quad \frac{48}{5} \leq \nu^2 \leq 12 \ ,
      \end{equation}
      respectively. Hence, both $y_f(\nu)$ and $y_h(\nu)$ are non-negative for all ${\frac{48}{5} \leq \nu^2 \leq 10}$. By comparing the latter double inequality with the double inequality in Lemma~\ref{lem:InequalityForPi} we deduce that $y_f(\pi)$ and $y_h(\pi)$ are non-negative, and hence, all roots in~\eqref{eq:RootsOfDerivativeOfDifferenceF} and~\eqref{eq:RootsOfDerivativeOfDifferenceH} are real.
\end{proof}
\begin{corollary}\label{cor:PositivityOfF}
      The difference $\Delta_f(x)$ is positive for all sufficiently small positive real numbers $x$. For all negative real numbers $x$ with sufficiently small absolute value, the difference $\Delta_f(x)$ is negative.
\end{corollary}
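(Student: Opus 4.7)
The plan is to read off the sign of $\Delta_f(x)$ near the origin directly from the Taylor expansions given in equations~\eqref{eq:TaylerExpansionF} and~\eqref{eq:TaylerExpansionG}, which are already available at this point in the paper. Since the linear terms of $f(x)$ and $g(x)$ agree, the lowest-order nontrivial term of $\Delta_f(x) = g(x) - f(x)$ is cubic, and its sign controls the behavior of $\Delta_f$ on a small neighborhood of $0$.

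Concretely, I would first subtract~\eqref{eq:TaylerExpansionF} from~\eqref{eq:TaylerExpansionG} to obtain
\begin{equation*}
  \Delta_f(x) \simeq \left(\frac{2}{\pi^2-4} - \frac{1}{3}\right) x^3 + \Ord\bigl(x^5\bigr) = \frac{10-\pi^2}{3(\pi^2-4)}\, x^3 + \Ord\bigl(x^5\bigr) .
\end{equation*}
Then I would invoke Lemma~\ref{lem:InequalityForPi}, which gives $\pi^2 < 10$ and $\pi^2 > 29/3 > 4$, so both the numerator $10-\pi^2$ and the denominator $3(\pi^2-4)$ are strictly positive. Therefore the leading coefficient $\frac{10-\pi^2}{3(\pi^2-4)}$ is positive.

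Having established the positivity of the leading coefficient, it remains to note that there exists $\varepsilon > 0$ such that for all $x$ with $0 < |x| < \varepsilon$ the cubic term dominates the remainder $\Ord(x^5)$, so that $\Delta_f(x)$ has the same sign as $x^3$ on this punctured neighborhood. Hence $\Delta_f(x) > 0$ for sufficiently small positive $x$ and $\Delta_f(x) < 0$ for sufficiently small negative $x$, which is exactly the claim.

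The main obstacle is essentially bookkeeping: verifying that the cubic coefficient is indeed positive requires the nontrivial bound $\pi^2 < 10$, but this has already been isolated in Lemma~\ref{lem:InequalityForPi}, so the remaining work reduces to simple algebra and the standard observation that a convergent power series whose first nonvanishing coefficient is positive is positive on a one-sided neighborhood of the origin.
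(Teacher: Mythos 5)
Your proposal is correct and follows essentially the same route as the paper: subtracting the Taylor expansions \eqref{eq:TaylerExpansionF} and \eqref{eq:TaylerExpansionG} to get the leading cubic coefficient $\frac{10-\pi^2}{3(\pi^2-4)}$ of $\Delta_f(x)$, and then using Lemma~\ref{lem:InequalityForPi} to confirm that this coefficient is positive. The only (welcome) addition is that you spell out explicitly why a positive leading cubic coefficient forces the stated signs on a punctured neighborhood of the origin, which the paper leaves implicit.
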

\begin{proof}\label{proof:PositivityOfF}
      We incorporate the equations~\eqref{eq:TaylerExpansionF} and~\eqref{eq:TaylerExpansionG} into~\eqref{eq:DifferenceF} to derive the first-order approximation of $\Delta_f(x)$ as
      \begin{equation}\label{eq:FirstOrderApproximationDf}
        \Delta_f(x) \simeq  \frac{10-\pi^2}{3 (\pi^2 - 4)}\, x^3 + \Ord\bigl(x^{5}\bigr)
      \end{equation}
      for all sufficiently small values of $\abs{x}$. From the double inequality in Lemma~\ref{lem:InequalityForPi}, we deduce that the last ratio is always positive which completes the proof.
\end{proof}
\begin{corollary}\label{cor:PositivityOfH}
      The difference $\Delta_h(x)$ is positive for all sufficiently large positive real numbers $x$. For all negative real numbers $x$ with sufficiently large absolute value, the difference $\Delta_h(x)$ is negative.
\end{corollary}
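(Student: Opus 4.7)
The proof should mirror the strategy of Corollary~\ref{cor:PositivityOfF}, but with the Taylor expansions about zero replaced by the asymptotic power series expansions about infinity, since the claim concerns the behavior of $\Delta_h(x)$ for large $\abs{x}$ rather than small $\abs{x}$.

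The plan is to subtract the asymptotic expansion of $g(x)$ in~\eqref{eq:AsymptoticExpansionG} from the asymptotic expansion of $h(x)$ in~\eqref{eq:AsymptoticExpansionH}. Since both series begin with the same constant term $\tfrac{\pi}{2}$, the leading term of $\Delta_h(x) = h(x)-g(x)$ is of order $x^{-1}$, and a short computation gives
\begin{equation*}
  \Delta_h(x) \simeq \frac{10-\pi^2}{4\,x} + \Ord\bigl(x^{-2}\bigr)
\end{equation*}
for all sufficiently large values of $\abs{x}$. This plays the same role as~\eqref{eq:FirstOrderApproximationDf} does in the proof of Corollary~\ref{cor:PositivityOfF}.

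Next I would invoke Lemma~\ref{lem:InequalityForPi}, which gives $\pi^2 < 10$, so that the coefficient $\frac{10-\pi^2}{4}$ is strictly positive. Consequently, for $x$ sufficiently large and positive, the leading term dominates and is positive, so $\Delta_h(x) > 0$; for $x$ sufficiently large in absolute value and negative, the leading term is negative, so $\Delta_h(x) < 0$. This completes the argument.

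I do not expect a serious obstacle. The only mildly delicate step is the bookkeeping in forming the difference of the two asymptotic expansions (making sure that the coefficient simplifies correctly to $(10-\pi^2)/4$, i.e.\ $1 - (\pi^2-6)/4 = (10-\pi^2)/4$); once that coefficient is in hand, the sign analysis is immediate from Lemma~\ref{lem:InequalityForPi}, exactly parallel to how the numerator $10-\pi^2$ appearing in~\eqref{eq:FirstOrderApproximationDf} was handled in Corollary~\ref{cor:PositivityOfF}.
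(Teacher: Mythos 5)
Your proposal is correct and follows essentially the same route as the paper: both substitute the asymptotic expansions \eqref{eq:AsymptoticExpansionG} and \eqref{eq:AsymptoticExpansionH} into $\Delta_h(x)=h(x)-g(x)$, obtain the leading term $\frac{10-\pi^2}{4}\,x^{-1}$, and conclude the sign from Lemma~\ref{lem:InequalityForPi}. Your explicit remark that the $x^{-1}$ factor flips the sign for negative $x$ is, if anything, slightly more careful than the paper's wording.
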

\begin{proof}\label{proof:PositivityOfH}
      We incorporate the equations~\eqref{eq:AsymptoticExpansionG} and~\eqref{eq:AsymptoticExpansionH} into~\eqref{eq:DifferenceH} to derive the first-order asymptotic approximation of $\Delta_h(x)$ as
      \begin{equation}\label{eq:FirstOrderApproximationDh}
        \Delta_h(x) \simeq  \frac{10-\pi^2}{4}\, x^{-1} + \Ord\bigl(x^{-2}\bigr)
      \end{equation}
      for all sufficiently large values of $\abs{x}$. From the double inequality in Lemma~\ref{lem:InequalityForPi}, we deduce that the last ratio is always positive which completes the proof.
\end{proof}
\begin{proof}[Proof of \textsc{Theorem}~\ref{thm:SharpShaferInequality}]\label{proof:SharpShaferInequality}
      We only consider the case of ${x \geq 0}$. The case of ${x \leq 0}$ can be proved analogously, due to the point symmetric property of all functions in Theorem~\ref{thm:SharpShaferInequality}. On the one hand, we know from Corollary~\ref{cor:RootsOfDerivativeOfDifference} that each of differences $\Delta_f(x)$ and $\Delta_h(x)$ has only one stationary point for all ${x > 0}$. On the other hand, each of them attains equal values at ${x=0}$ and as ${x \mapsto \infty}$, i.e., ${\Delta_f(0) = \Delta_f(x\mapsto \infty) = 0}$ and ${\Delta_h(0) = \Delta_h(x\mapsto \infty) = 0}$, according to the equation~\eqref{eq:DifferenceIdentity}. Hence, and because of Corollary~\ref{cor:PositivityOfF} and~\ref{cor:PositivityOfH}, as $x$ increases from zero to infinity each of the differences $\Delta_f(x)$ and $\Delta_h(x)$ increases monotonically from zero to a maximum value and from there on decreases monotonically toward zero. Thus, both differences $\Delta_f(x)$ and $\Delta_h(x)$ are non-negative for all ${x \geq 0}$. In other words, if one of the differences had at least one sign change for some value of ${x > 0}$, then it would have at least two stationary points for ${x > 0}$, but this contradicts the curve tracing in Corollary~\ref{cor:RootsOfDerivativeOfDifference}.
\end{proof}

\section{Proof of the Relative Errors}\label{sec:Appendix_B}
\begin{definition}\label{def:AuxiliarySet}
      Let $r_f(x)$ and $r_h(x)$ be defined as in Definition~\ref{def:RelativeError}. Then, we define two auxiliary sets by
      \begin{equation}\label{eq:AuxiliarySetIf}
        \setI_f \coloneqq  \bigl\{ x\in\setR \mid r_f(x) \geq r_h(x) \bigr\}
      \end{equation}
      and
      \begin{equation}\label{eq:AuxiliarySetIh}
        \setI_h \coloneqq  \bigl\{ x\in\setR \mid r_f(x) < r_h(x) \bigr\} \ .
      \end{equation}
\end{definition}
Note that both sets $\setI_f$ and $\setI_h$ are disjoint and their union is the whole real domain.
\begin{corollary}\label{cor:InequalityG}
      For all ${x \in \setI_f}$, ${x \geq 0}$, the inequality
      \begin{equation}\label{eq:InequalityG1}
        g(x) \geq  \frac{h(x) + f(x)}{2}
      \end{equation}
      holds. If ${x \in \setI_h}$, ${x \geq 0}$, then the inequality
      \begin{equation}\label{eq:InequalityG2}
        g(x) <  \frac{h(x) + f(x)}{2}
      \end{equation}
      holds. In the case of ${x \in \setI_f}$ with ${x < 0}$ and ${x \in \setI_h}$ with ${x < 0}$ the above inequalities are reversed.
\end{corollary}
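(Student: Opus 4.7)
The plan is a short algebraic manipulation: clear a common denominator and track the sign of ${g(x) = \arctan(x)}$. By Definition~\ref{def:RelativeError}, the condition ${r_f(x) \geq r_h(x)}$ that defines $\setI_f$ reads
\[
\frac{g(x) - f(x)}{g(x)} \;\geq\; \frac{h(x) - g(x)}{g(x)} \,,
\]
with common denominator $g(x)$. My first step is to clear this denominator by multiplying both sides through by $g(x)$; the result rearranges in one line to a comparison of $g(x)$ with the midpoint ${\frac{h(x)+f(x)}{2}}$. The same manipulation applied to the strict condition ${r_f(x) < r_h(x)}$ defining $\setI_h$ yields the corresponding strict comparison.

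The direction of the resulting comparison is controlled entirely by the sign of $g(x)$. For ${x > 0}$ we have ${g(x) = \arctan(x) > 0}$, so multiplication preserves the inequality direction and produces ${g(x) \geq \frac{h(x)+f(x)}{2}}$ on $\setI_f \cap (0,\infty)$ and ${g(x) < \frac{h(x)+f(x)}{2}}$ on $\setI_h \cap (0,\infty)$. For ${x < 0}$ the point-symmetry of $\arctan$ gives ${g(x) < 0}$, so multiplication flips the direction and delivers exactly the reversed inequalities announced in the corollary. The boundary point ${x=0}$ is covered by the continuous extension noted after Definition~\ref{def:RelativeError}: since ${f(0) = g(0) = h(0) = 0}$, all claimed inequalities become the trivial equality.

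I do not expect any genuine obstacle. The corollary is essentially a restatement of the comparison between $r_f$ and $r_h$ in midpoint form, and beyond Definition~\ref{def:RelativeError} itself, the only ingredient used is the elementary sign information that $g$ has the same sign as $x$. In particular, none of the more delicate machinery of Appendix~A (stationary points of $\Delta_f$, $\Delta_h$, or the bounds on $\pi^2$) is needed here.
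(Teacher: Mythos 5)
Your proof is correct and takes essentially the same route as the paper's: multiply the defining inequality of $\setI_f$ (resp.\ $\setI_h$) through by $g(x)$, rearrange to the midpoint comparison, and use that $g(x)$ has the sign of $x$ to decide whether the direction is preserved ($x>0$) or reversed ($x<0$). Your explicit remark that $x=0$ reduces to a trivial equality of all three functions is a minor point the paper leaves implicit.
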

\begin{proof}\label{proof:InequalityG}
      For all ${x \in \setI_f}$ with ${x \geq 0}$, and from Definition~\ref{def:RelativeError} and~\ref{def:AuxiliarySet} it follows that
      \begin{multline}\label{eq:InequalityGProof1}
        r_f(x) \geq r_h(x) \ \Leftrightarrow\ r_f(x) g(x) \geq r_h(x) g(x) \\
        \Leftrightarrow\  g(x)- f(x) \geq h(x) - g(x) \ \Leftrightarrow\  g(x) \geq  \frac{h(x) + f(x)}{2} \ .
      \end{multline}
      Similarly, for all ${x \in \setI_h}$ with ${x \geq 0}$ it follows that
      \begin{multline}\label{eq:InequalityGProof2}
        r_f(x) < r_h(x) \ \Leftrightarrow\ r_f(x) g(x) < r_h(x) g(x) \\
        \Leftrightarrow\  g(x)- f(x) < h(x) - g(x) \ \Leftrightarrow\  g(x) <  \frac{h(x) + f(x)}{2} \ .
      \end{multline}
      In the case of ${x < 0}$, the functions $f(x)$, $g(x)$ and $h(x)$ are negative, and hence, the inequalities in~\eqref{eq:InequalityG1} and~\eqref{eq:InequalityG2} are reversed.
\end{proof}
\begin{proof}[Proof of \textsc{Theorem}~\ref{thm:RelativeError}]\label{proof:RelativeError}
    The proof of inequality~\eqref{eq:RelativeErrorMax} follows from inequality~\eqref{eq:SharpShaferInequality} and Definition~\ref{def:RelativeError}. It gives
    \begin{equation}
      r_f(x) = \frac{g(x) - f(x)}{g(x)} \leq \frac{h(x) - f(x)}{g(x)} \leq \frac{h(x) - f(x)}{f(x)}
    \end{equation}
    and
    \begin{equation}
      r_h(x) = \frac{h(x) - g(x)}{g(x)} \leq \frac{h(x) - f(x)}{g(x)} \leq \frac{h(x) - f(x)}{f(x)}
    \end{equation}
    which in turn result in
    \begin{equation}
      \max\bigl\{r_f(x), r_h(x)\bigr\} \leq \frac{h(x)-f(x)}{f(x)} \ .
    \end{equation}
    The proof of inequality~\eqref{eq:RelativeErrorMin} follows from Definition~\ref{def:RelativeError} and Corollary~\ref{cor:InequalityG}. For all ${x \in \setI_f}$ with ${x \geq 0}$, it gives
    \begin{equation}
      r_f(x) = 1-\frac{f(x)}{g(x)} \geq 1- \frac{f(x)}{\frac{h(x) + f(x)}{2}} = \frac{h(x) - f(x)}{h(x)+f(x)}
    \end{equation}
    and
    \begin{equation}
      r_h(x) = \frac{h(x)}{g(x)} - 1 \leq \frac{h(x)}{\frac{h(x) + f(x)}{2}} - 1 = \frac{h(x) - f(x)}{h(x)+f(x)}
    \end{equation}
    which in turn result in
    \begin{equation}\label{eq:InequalityHilfe1}
      r_h(x) \leq \frac{h(x) - f(x)}{h(x)+f(x)} \leq r_f(x) \ .
    \end{equation}
    If ${x \in \setI_h}$ with ${x \geq 0}$, then it gives
    \begin{equation}
      r_f(x) = 1-\frac{f(x)}{g(x)} < 1- \frac{f(x)}{\frac{h(x) + f(x)}{2}} = \frac{h(x) - f(x)}{h(x)+f(x)}
    \end{equation}
    and
    \begin{equation}
      r_h(x) = \frac{h(x)}{g(x)} - 1 > \frac{h(x)}{\frac{h(x) + f(x)}{2}} - 1 = \frac{h(x) - f(x)}{h(x)+f(x)}
    \end{equation}
    which in turn result in
    \begin{equation}\label{eq:InequalityHilfe2}
      r_f(x) < \frac{h(x) - f(x)}{h(x)+f(x)} < r_h(x) \ .
    \end{equation}
    From the double inequalities~\eqref{eq:InequalityHilfe1} and~\eqref{eq:InequalityHilfe2}, we deduce that
    \begin{equation}
      \min\bigl\{r_f(x), r_h(x)\bigr\} \leq \frac{h(x)-f(x)}{h(x)+f(x)} \leq \max\bigl\{r_f(x), r_h(x)\bigr\}
    \end{equation}
    for all ${x \geq 0}$. For the case of ${x < 0}$, the proof can be obtained analogously. The identities in~\eqref{eq:RelativeErrorMax} and~\eqref{eq:RelativeErrorMin} arise from straightforward calculations.
\end{proof}



\section*{Acknowledgment}

Research described in the present work was supervised by Univ.-Prof.~Dr.~rer.~nat.~R.~Mathar, Institute for Theoretical Information Technology, RWTH Aachen University. The author would like to thank him for his professional advice and patience.
\newpage

\ifCLASSOPTIONcaptionsoff
  \newpage
\fi



\bibliographystyle{IEEEtran}
\bibliography{IEEEabrv,./Sections/ArcTan_Inequality}
\end{document}